\newcommand{\cobasa}[0]{\textsf{CoBaSA}}
\newcommand{\inez}[0]{\textsf{Inez}}
\newcommand{\comment}[1]{}
\newcommand{\ie}[0]{\emph{i.e.}, }
\newcommand{\eg}[0]{\emph{e.g.}, }
\newcommand\none[0]{\ensuremath{\mathtt{None}}}
\newcommand{\tr}[1]{\ensuremath{\mathtt{#1}}}
\newcommand{\ttr}[1]{\ensuremath{T\text{-}\mathtt{#1}}}
\newcommand{\state}[2]{\ensuremath{#1\ \|\ #2}}
\newcommand{\statenone}[1]{\state{#1}{\none}}
\newcommand{\subp}[2]{\ensuremath{\langle#1, #2\rangle}}
\newcommand{\Z}[0]{\ensuremath{\mathcal{Z}}}
\newcommand{\ZT}[0]{\ensuremath{\mathcal{Z} \cup T}}
\newcommand{\SigmaZ}[0]{\ensuremath{\Sigma_{\mathcal{Z}}}}
\newcommand{\SigmaZT}[0]{\ensuremath{\SigmaZ{} \cup \Sigma}}
\newcommand{\modelsz}[0]{\ensuremath{ \models_{\Z} }}
\newcommand{\modelszt}[0]{\ensuremath{ \models_{\Z \cup T} }}
\newcommand{\set}[2]{\ensuremath{\{#1\ |\ #2\}}}
\newcommand{\setsp}[2]{\ensuremath{\{ \subp{#1}{#2} \}}}
\newcommand{\trans}[0]{\ensuremath{ \longrightarrow }}
\newcommand{\transplus}[0]{\ensuremath{ \longrightarrow^{+} }}
\newcommand{\transstar}[0]{\ensuremath{ \longrightarrow^{*} }}
\newcommand{\transannot}[1]{\underset{\tr{#1}}{\longrightarrow}}
\newcommand{\cd}{\subp{C}{D}}
\newcommand{\cdprime}{\subp{C'}{D'}}
\newcommand{\pa}{\state{P}{A}}
\newcommand{\paprime}{\state{P'}{A'}}
\newcommand{\bct}[0]{\ensuremath{\operatorname{BC}(T)}}
\newcommand{\dpllt}[0]{DPLL($T$)}
\newtheorem{fact}{Fact}
\newcommand{\obj}[1]{\ensuremath{\operatorname{\mathsf{obj}}(#1)}}
\newcommand{\lb}[1]{\ensuremath{\operatorname{\mathsf{lb}}(#1)}}
\DeclareMathOperator{\aread}{\mathsf{read}}
\DeclareMathOperator{\awrite}{\mathsf{write}}
\DeclareMathOperator{\intf}{\mathsf{intf}}
\DeclareMathOperator{\intfz}{\mathsf{intf}_{\Z}}
\DeclareMathOperator{\intfsigma}{\mathsf{intf}_{\Sigma}}
\DeclareMathOperator{\maxc}{\mathsf{maxc}}
\DeclareMathOperator{\zatoms}{\mathsf{atoms_{\Z}}}
\DeclareMathOperator{\bounds}{\mathsf{bounds}}
\DeclareMathOperator{\zvars}{\mathsf{vars_{\Z}}}
\begin{document}

\author{Panagiotis Manolios and Vasilis Papavasileiou}

\institute{Northeastern University \\
  \email{\texttt{\{pete,vpap\}@ccs.neu.edu}}}

\title{ILP Modulo Theories}

\maketitle

\begin{abstract}
  We present Integer Linear Programming (ILP) Modulo Theories
  (IMT). An IMT instance is an Integer Linear Programming instance,
  where some symbols have interpretations in background theories.  In
  previous work, the IMT approach has been applied to industrial
  synthesis and design problems with real-time constraints arising in
  the development of the Boeing 787. Many other problems ranging from
  operations research to software verification routinely involve
  linear constraints and optimization. Thus, a general ILP Modulo
  Theories framework has the potential to be widely applicable.  The
  logical next step in the development of IMT and the main goal of
  this paper is to provide theoretical underpinnings.  This is
  accomplished by means of \bct{}, the Branch and Cut Modulo $T$
  abstract transition system. We show that \bct{} provides a sound and
  complete optimization procedure for the ILP Modulo $T$ problem, as
  long as $T$ is a decidable, stably-infinite theory. We compare a
  prototype of \bct{} against leading SMT solvers.
\end{abstract}



\section{Introduction}

The primary goal of this paper is to present the theoretical
underpinnings of the Integer Linear Programming (ILP) Modulo Theories
(IMT) framework for combining ILP with background theories. The
motivation for developing the IMT framework comes from our previous
work, where we used an ILP-based synthesis tool, \cobasa{}
(Component-Based System Assembly), to algorithmically synthesize
architectural models using the actual production design data and
constraints arising during the development of the Boeing 787
Dreamliner~\cite{hmp11}. According to Boeing engineers, previous
methods for creating architectural models required the
\emph{``cooperation of multiple teams of engineers working over long
  periods of time.''} We were able to synthesize architectures in
minutes, directly from the high-level requirements. What made this
possible was the combination of ILP with a custom decision procedure
for hard real-time constraints~\cite{hmp11}, \ie an instance of IMT.

ILP has been the subject of intensive research for more than five
decades~\cite{gomory63}. ILP solvers~\cite{cplex,scip} are routinely
used to solve practical optimization problems from a diverse set of
fields including operations research, industrial engineering,
artificial intelligence, economics, and software verification. Based
on our successful use of the IMT approach to solve architectural
synthesis problems and the widespread applicability of ILP and
optimization, we hypothesize that IMT has the potential to enable
interesting new applications, analogous to what is currently happening
with Satisfiability Modulo
Theories~\cite{barrett02,demoura02,dpllt,z3}.

We introduce the theoretical underpinnings of IMT via the \bct{}
framework (Branch and Cut Modulo $T$). \bct{} can be thought of as the
IMT counterpart to the \dpllt{} architecture for lazy
SMT~\cite{dpllt}. \bct{} models the branch-and-cut family of
algorithms for integer programming as an abstract transition system
and allows plugging in theory solvers. Building on classical results
on combining decision procedures~\cite{no79,noproof,combiningdp}, we
show that \bct{} provides a sound and complete optimization procedure
for the combination of ILP with stably-infinite theories. As a
side-product of our theoretical study of IMT, we show how to bound
variables while preserving optimality modulo the combination of Linear
Integer Arithmetic and a stably-infinite theory.

The rest of the paper is organized as follows. In
Section~\ref{sec:bct}, we formally define IMT and provide an abstract
\bct{} architecture for solving IMT problems. IMT can be seen as SMT
with a more expressive core than propositional logic.  We elaborate on
the relationship between IMT and SMT in Section~\ref{sec:smt}.  We
have implemented \bct{}, using the SCIP MIP solver~\cite{scip} as the
core solver. We carried out a sequence of experiments, as outlined in
Section~\ref{sec:experiments}. The first experiment shows that for our
synthesis problems, ILP solvers~\cite{cplex,scip} outperform the Z3
SMT solver~\cite{z3}.  In the second experiment, we compared our
prototype implementation with state-of-the-art SMT
solvers~\cite{z3,mathsatlia} on SMT-LIB benchmarks. An analysis of the
results suggests that \bct{} is an interesting future alternative to
the \dpllt{} architecture.  We provide an overview of related work in
Section~\ref{sec:related} and conclude with
Section~\ref{sec:conclusions}.

\section{\bct{}}
\label{sec:bct}

In this section, we formally define IMT. We also provide a general
\bct{} architecture for solving IMT problems. We describe \bct{} by
means of a \emph{transition system}, similar in spirit to
\dpllt{}~\cite{dpllt}. The \bct{} architecture allows one to obtain a
solver for ILP Modulo $T$ by combining a branch-and-cut ILP solver
with a background solver for $T$.

\subsection{Formal Preliminaries}
\label{sec:def}

\comment{We assume a fixed set of variable symbols $\mathcal{V}$.}

An \emph{integer linear expression} is a sum of the form $c_1 v_1 +
\cdots + c_n v_n$ for integer constants $c_i$ and variable symbols
$v_i$. An \emph{integer linear constraint} is a constraint of the form
$e\ \bowtie\ r$, where $e$ is an integer linear expression, $r$ is an
integer constant, and $\bowtie$ is one of the relations $<$, $\leq$,
$=$, $>$, and $\geq$. An \emph{integer linear formula} is a set of
(implicitly conjoined) integer linear constraints. We will use
propositional connectives over integer linear constraints and formulas
as appropriate and omit $\wedge$ when this does not cause ambiguity
(\ie juxtaposition will denote conjunction). An \emph{integer linear
  programming (ILP) instance} is a pair $C, O$, where $C$ is an
integer linear formula, and the \emph{objective function} $O$ is an
integer linear expression. Our goal will always be \emph{minimizing}
the objective function.

We assume a fixed set of variables $\mathcal{V}$. An \emph{integer
  assignment} $A$ is a function $\mathcal{V} \rightarrow \mathbb{Z}$,
where $\mathbb{Z}$ is the set of integers. We say that an assignment
$A$ \emph{satisfies} the constraint $c = (c_1 v_1 + \cdots + c_n v_n\
\bowtie\ r)$ (where $\bowtie$ is one of the relations $<$, $\leq$,
$=$, $>$, $\geq$, and every $v_i$ is in $\mathcal{V}$) if $\sum_i c_i
\cdot A(v_i)\ \bowtie\ r$.  An assignment $A$ satisfies a formula $C$
if it satisfies every constraint $c \in C$.  A formula $C$ is
\emph{integer-satisfiable} or \emph{integer-consistent} if there is an
assignment $A$ that satisfies $C$. Otherwise, it is called
\emph{integer-unsatisfiable} or \emph{integer-inconsistent}.

A signature $\Sigma$ consists of a set $\Sigma^C$ of constant symbols,
a set $\Sigma^F$ of function symbols, a set $\Sigma^P$ of predicate
symbols, and a function $ar: \Sigma^F \cup \Sigma^P \rightarrow
\mathbb{N}^{+}$ that assigns a non-zero natural number (the arity) to
every function and predicate symbol. A $\Sigma$-formula is a
first-order logic formula constructed using the symbols in $\Sigma$.
A $\Sigma$-theory $T$ is a closed set of $\Sigma$-formulas (\ie $T$
contains no free variables). We will write theory in place of
$\Sigma$-theory when $\Sigma$ is clear from the context (similarly for
terms and formulas).

\begin{example}
  Let $\Sigma_{\mathcal{A}}$ be a signature that contains a binary
  function $\aread$, a ternary function $\awrite$, no constants, and
  no predicate symbols. The theory $T_{\mathcal{A}}$ of arrays
  (without extensionality) is defined by the following
  formulas~\cite{mc62}:
  \begin{align*}
    \forall a\ \forall i\ \forall e\ & [\aread(\awrite(a, i, e), i)
    = e] \\
    \forall a\ \forall i\ \forall j\ \forall e\ & [i \neq j \Rightarrow
    \aread(\awrite(a, i, e), j) = \aread(a, j)].
  \end{align*}
\end{example}

A formula $F$ is $T$-satisfiable or $T$-consistent if $F \wedge T$ is
satisfiable in the first-order sense (\ie there is an interpretation
that satisfies it). A formula $F$ is called $T$-unsatisfiable or
$T$-inconsistent if it is not $T$-satisfiable. For formulas $F$ and
$G$, $F$ $T$-entails $G$ (in symbols $F \models_T G$) if $F \wedge
\neg G$ is $T$-inconsistent.

\begin{definition}
  Let \SigmaZ{} be a signature that contains the constant symbols $\{
  0,$ $\pm 1,$ $\pm 2,$ $\ldots \}$, a binary function symbol $+$, a unary
  function symbol $-$, and a binary predicate symbol $\leq$. The
  theory of Linear Integer Arithmetic, which we will denote by $\Z$,
  is the \SigmaZ{}-theory defined by the set of closed
  \SigmaZ{}-formulas that are true in the standard model (an
  interpretation whose domain is $\mathbb{Z}$, in which the symbols in
  \SigmaZ{} are interpreted according to their standard meaning over
  \Z{}).
\end{definition}

We will use relation symbols like $<$ that do not appear in \SigmaZ{},
and also multiplication by a constant (which is to be interpreted as
repeated addition); these are only syntactic shorthands. We will
frequently view an integer assignment $A$ as the set of formulas
$\set{v = A(v)}{v \in \mathcal{V}}$, where $A(v)$ is viewed as a
\SigmaZ{}-term. An integer assignment $A$ viewed as a set of formulas
is always $\Z$-consistent. If $A$ is an integer assignment and $A$
satisfies an integer linear formula $C$, it is also the case that $A
\modelsz C$.  If $A$ is an integer assignment, $T$ is a theory and $F$
is a formula, we will say that $A$ is a $T$-model of $F$ if $A$ is
$T$-consistent and $A \modelszt F$. Note that a $T$-model is not a
first-order model.

\comment{Whenever convenient, we will be viewing first-order models as
  integer assignments; this is possible, because for every first-order
  model $M$ of \Z{} and for every variable symbol $v$, there is a
  unique integer constant $c$ such that $M \models v = c$.}

A \emph{$\Sigma$-interface atom} is a $\Sigma$-atomic formula (\ie the
application of a predicate symbol or equality), possibly annotated
with a variable symbol, \eg $(x=y)^v$. The meaning of a
$\Sigma$-interface atom with no annotation remains the same. An
annotated $\Sigma$-interface atom $\phi^v$ denotes $\phi
\Leftrightarrow v > 0$. A set of $\Sigma$-interface atoms will often
be used to denote their conjunction. \comment{We will assume that for
  every interface atom of the form $x = y$ or $(x = y)^v$ where $x$,
  $y$, and $v$ are variable symbols, at most one of $x$ and $y$
  appears in integer linear constraints; the opposite would defy the
  purpose of interface atoms.}

\begin{definition}[ILP Modulo $T$ Instance]
  An \emph{ILP Modulo (Theory) $T$ instance}, where the signature
  $\Sigma$ of $T$ is disjoint from \SigmaZ{}, is a triple of the form
  $C, I, O$, where $C$ is an integer linear formula, $I$ is a set of
  $\Sigma$-interface atoms, and $O$ is an objective function. The
  variables that appear in both $C$ and $I$ are called interface
  variables.
  \label{def:imtinstance}
\end{definition}

An ILP Modulo $T$ instance can be thought of as an integer linear
program that contains terms which have meaning in $T$. In
Definition~\ref{def:imtinstance}, the interface atoms (elements of
$I$) are separated from the linear constraints, \ie there are no
$\Sigma$-terms embedded within integer linear constraints.  This is
not a restriction, because every set of $(\Sigma \cup
\SigmaZ{})$-atomic formulas can be written in separate
form~\cite[``Variable Abstraction'']{combiningdp}.

\begin{example}
  Let $\Sigma$ be a signature that contains the unary function symbol
  $f$. The formula $f(x + 1) + f(y + 2) \geq 3$ (where $x$ and $y$ are
  variable symbols) can be written in separate form as $C = \{ v_3 +
  v_4 \geq 3, v_1 = x + 1, v_2 = y + 2 \}$ and $I = \{ v_3 =
  f(v_1), v_4 = f(v_2) \}$. $C$ is an integer linear formula; $I$ is a
  set of $\Sigma$-interface atoms; and $\Sigma$ is disjoint from
  \SigmaZ{}. Variable abstraction introduced new variables, $v_{1},
  \ldots, v_4$. $C$ and $I$ only share variable symbols.

  Let $A$ be the assignment $\{x = 2, y = 1, v_1 = 3, v_2 = 3, v_3 =
  3, v_4 = 0\}$. Clearly $A \modelsz C$. However, $A \not
  \models_\emptyset I$, where $\emptyset$ stands for the \emph{theory
    of uninterpreted functions} (also called the empty theory, because
  it has an empty set of formulas). The reason is that $v_1 = v_2$ but
  $f(v_1) \neq f(v_2)$. In contrast, the assignment $A' = \{x = 2, y =
  1, v_1 = 3, v_2 = 3, v_3 = 3, v_4 = 3\}$ is a $\emptyset$-model of
  $C \wedge I$ per our definition, as it is $\emptyset$-consistent and
  $A' \models_{\Z{} \cup \emptyset} C \wedge I$.
\label{ex:abstraction}
\end{example}

\subsection{Transition System}

\begin{definition}[Difference Constraint]
  A difference constraint is a constraint of the form $v_i \leq c$ or
  $v_i - v_j \leq c$, where $v_i$ and $v_j$ are integer variables and
  $c$ is an integer constant.
\end{definition}

\begin{definition}[Subproblem]
  A subproblem is a pair of the form \cd{}, where $C$ is a set of
  constraints and $D$ is a set of difference constraints.
\end{definition}

In a subproblem \cd{}, we distinguish between the arbitrary
constraints in $C$ and the simpler constraints in $D$ in order to
provide a good interface for the interaction between the core ILP
solver and background theory solvers that only understand difference
logic, \ie a limited fragment of $\Z$. It is the responsibility of the
core solver to notify the theory solver about the difference
constraints that hold. Difference constraints are clearly a special
case of integer linear constraints.

\begin{definition}[State]
  A state of \bct{} is a tuple {\pa}, where $P$ is a set of
  subproblems, and $A$ is either the constant \none, or an assignment.
  If $A$ is an assignment, it can optionally be annotated with the
  superscript $-\infty$.
\end{definition}

Our abstract framework maintains a list of open subproblems, because
it is designed to allow different branching strategies. This is in
contrast to an algorithm like CDCL that does not keep track of
subproblems explicitly. There, subproblems are implicit, \ie
backtracking can reconstruct them. ILP solvers branch over non-Boolean
variables in arbitrary ways, thus mandating that we explicitly record
subproblems.

In a state \pa{}, the assignment $A$ is the best known
($T$-consistent) solution so far, if any. It has a superscript
$-\infty$ if it satisfies all the constraints, but is not optimal
because the IMT instance admits solutions with arbitrarily low
objective values. If this is the case, it is useful to provide an
assignment and to also report that no optimal assignment exists.

The interface atoms $I$ and the objective function $O$ are not part of
the \bct{} states because they do not change over time.  $\obj{A}$
denotes the value of the objective function $O$ under assignment $A$:
if $O = \sum_i c_i v_i$, then $\obj{A} = \sum_i c_i \cdot A(v_i)$. The
objective function itself is not an argument to
$\operatorname{\mathsf{obj}}$ because it will be clear from the
context which objective function we are referring to. For convenience,
we define $\obj{\none} = +\infty$ and $\obj{A^{ -\infty}} =
-\infty$. Function $\lb{\cd}$ returns a lower bound for the possible
values of the objective function $O$ for the subproblem \cd{}: by
definition, there is no $A$ such that $A$ satisfies $C \wedge D$ and
$\obj{A} < \lb{\cd}$.

Figure~\ref{fig:bct-trans} defines the \emph{transition relation}
$\longrightarrow$ of \bct{} (a binary relation over states). In the
rules, $c$ and $d$ always denote integer linear constraints and
difference constraints. $C$ (possibly subscripted) denotes an integer
linear formula (set of integer linear constraints), while $D$ denotes
a set of difference constraints. $C\ c$ stands for the set union $C
\cup \{c\}$, under the implicit assumption that $c \notin C$;
similarly for $D\ d$. $C$ and $D$ are always well-formed sets, \ie
they contain no syntactically duplicate elements. $P$ and $P'$ stand
for sets of syntactically distinct subproblems, while $A$ and $A'$ are
integer assignments.  $P \uplus Q$ denotes the union $P \cup Q$, under
the implicit assumption that the two sets are disjoint. \comment{$F$
  stands for a $(\Sigma \cup \Sigma_\Z)$-formula, where $\Sigma$ is
  the signature of $T$.} The intuitive meaning of the different \bct{}
rules is the following:

\begin{figure}[t!]
  \begin{align*}
    \tr{Branch}\ \ &
    \begin{aligned}
      & \state{P \uplus \{ \cd \}}{A} \trans
      \state{P \cup \{\subp{C_i}{D}\ |\ 1 \leq i \leq n\}}{A} \\
      & \text{if } \left\{
        \begin{array}{l}
          n > 1 \\
          D \modelsz (C \Leftrightarrow \bigvee_{1 \leq i \leq n}
          C_i) \\
          C_i \text{ are syntactically distinct}
        \end{array}
      \right.
      \\
    \end{aligned} \\[8pt]
    \tr{Learn}\ \ &
    \begin{aligned}
      & \state{P \uplus \{ \cd \}}{A} \trans
      \state{P \cup \{ \subp{C\ c}{D} \}}{A} \\
      & \text{if $C \wedge D \modelsz c$}
    \end{aligned} \\[8pt]
    \ttr{Learn}\ \ &
    \begin{aligned}
      & \state{P \uplus \{ \cd \}}{A} \trans
      \state{P \cup \{ \subp{C\ c}{D} \} }{A} \\
      & \text{if $C \wedge D \wedge I \modelszt c$}
    \end{aligned} \\[8pt]
    \tr{Propagate}\ \ &
    \begin{aligned}
      & \state{P \uplus \{ \cd \}}{A} \trans
      \state{P \cup \{ \subp{C}{D\ d} \}}{A} \\
      & \text{if $C \wedge D \modelsz d$}
    \end{aligned} \\[8pt]
    \tr{Forget}\ \ &
    \begin{aligned}
      & \state{P \uplus \{ \subp{C\ c}{D} \}}{A} \trans
      \state{P \cup \{ \subp{C}{D} \}}{A} \\
      & \text{if $C \wedge D \modelsz c$}
    \end{aligned} \\[8pt]
    \tr{Drop}\ \ &
    \begin{aligned}
      & \state{P \uplus \{ \cd \}}{A} \trans \pa \\
      & \text{if $C \wedge D$ is integer-inconsistent}
    \end{aligned} \\[8pt]
    \tr{Prune}\ \ &
    \begin{aligned}
      & \state{P \uplus \{ \cd \}}{A} \trans \pa \\
      & \text{if }
      \left\{
        \begin{array}{l}
          A \neq \none \\
          \lb{\cd} \geq \obj{A}
        \end{array}
      \right.
    \end{aligned} \\[8pt]
    \tr{Retire}\ \ &
    \begin{aligned}
      & \state{P \uplus \{ \cd \}}{A} \trans \state{P}{A'} \\
      & \text{if }
      \left\{
        \begin{array}{l}
          A' \text{ is a $T$-model of } C \wedge D \wedge I \\
          \obj{A'} < \obj{A} \\
          \text{for any $T$-model $B$ of $C \wedge D
            \wedge I$, $\obj{A'} \leq \obj{B}$} \\
        \end{array}
      \right.
    \end{aligned} \\[8pt]
    \tr{Unbounded}\ \ &
    \begin{aligned}
      & \state{P \uplus \{ \cd \}}{A} \trans
      \state{\emptyset}{A'^{\ -\infty}} \\
      & \text{if } \left\{
        \begin{array}{l}
          A' \text{ is a $T$-model of } C \wedge D \wedge I \\
          \obj{A'} \leq \obj{A} \\
          \text{for any integer $k$, there exists a
            $T$-model $B$ of $C \wedge D \wedge I$} \\
          \hspace{6pt}\text{such that $\obj{B} < k$}
        \end{array}
      \right.
    \end{aligned}
  \end{align*}
  \caption{The \bct{} Transition System}
  \label{fig:bct-trans}
\end{figure}

\begin{description}
\item[\tr{Branch}] \hfill \\
  Case-split on a subproblem \cd{}, by replacing it with two or more
  different subproblems $\subp{C_i}{D}$. If there is a satisfying
  assignment for $C \wedge D$, this assignment will also satisfy $C_i
  \wedge D$ for some $i$, and conversely.
\item[\tr{Learn}, \ttr{Learn}, \tr{Propagate}] \hfill \\
  Add an entailed constraint (in the case of \tr{Learn} and
  \ttr{Learn}) or difference constraint (\tr{Propagate}) to a
  subproblem. \ttr{Learn} takes the theory $T$ into
  account. \ttr{Learn} is strictly more powerful than \tr{Learn}. We
  retain the latter as a way to denote transitions that do not involve
  theory reasoning.
\item[\tr{Forget}] \hfill \\
  Remove a constraint entailed by the remaining constraints of a
  subproblem.
\item[\tr{Drop}, \tr{Prune}] \hfill \\
  Eliminate a subproblem either because it is unsatisfiable
  (\tr{Drop}), or because it cannot lead to a solution better than the
  one already known.
\item[\tr{Retire}, \tr{Unbounded}] \hfill \\
  The solution to a subproblem becomes the new incumbent solution, as
  long as it improves upon the objective value of the previous
  solution. If there are solutions with arbitrarily low objective
  values, we don't need to consider other subproblems.
\end{description}

The observant reader will have noticed that the \ttr{Learn} rule is
very powerful, \ie it allows for combined \ZT{}-entailment. This is in
pursuit of generality. Our completeness strategy
(Theorem~\ref{thm:complete}) will not depend in any way on performing
combined arithmetic and theory reasoning, but only on extracting
equalities and disequalities from the difference
constraints. Entailment modulo \ZT{} is required if we want to learn
clauses, because they are represented as linear
constraints. Interesting implementations of \bct{} may go beyond
clauses and apply \ttr{Learn} for theory-specific cuts.

We define the binary relations $\transplus$ and $\transstar$ over
\bct{} states as follows: $S \transplus S'$ if $S \trans S'$, or there
exists some state $Q$ such that $S \transplus Q$ and $Q \trans S'$. $S
\transstar S'$ if $S = S'$ or $S \transplus S'$. When convenient, we
will annotate a transition arrow between two \bct{} states with the
name of the rule that relates them, for example $S \transannot{Branch}
S'$.

A \emph{starting state} for \bct{} is a state of the form
$\statenone{\setsp{C}{\emptyset}}$, where $C$ is the set of integer
linear constraints of an ILP Modulo $T$ instance. A \emph{final state}
is a state of the form \state{\emptyset}{A} ($A$ can also be \none{},
or an assignment annotated with $-\infty$).

\subsection{Soundness and Completeness}
\label{subsec:soundness}

Throughout this Section, we assume an IMT instance with objective
function $O$ and a set of interface atoms
$I$. Theorems~\ref{thm:sound-sat} and~\ref{thm:sound-unsat}
characterize \bct{} soundness. Theorem~\ref{thm:complete} relies upon
classical results for combining decision
procedures~\cite{no79,noproof,combiningdp} and shows that~\bct{} can
be applied in a complete way.

\begin{lemma}
  \label{lemma:l1}
  For states \pa{} and \paprime{} such that $\pa \trans \paprime$, if
  there is an assignment $B$ such that $B$ is a $T$-model of $C \wedge
  D \wedge I$ for some $\cd \in P$, then one of the following
  conditions holds: either
  \begin{inparaenum}[(i)]
  \item\label{cond:l1:i} $\obj{A'} \leq \obj{B}$, or
  \item\label{cond:l1:ii} $B$ is a $T$-model of $C' \wedge D' \wedge
    I$ for some $\cdprime \in P'$.
  \end{inparaenum}
\end{lemma}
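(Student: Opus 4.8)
The plan is to do a case analysis on the \bct{} rule that justifies $\pa \trans \paprime$. Fix the subproblem $\cd \in P$ and the assignment $B$ supplied by the hypothesis, so $B$ is a $T$-model of $C \wedge D \wedge I$ (hence in particular $B$ is $T$-consistent and $B$ satisfies $C$ and $D$). Every rule rewrites exactly one subproblem of the left-hand state, and leaves the remaining subproblems inside $P'$ (for \tr{Unbounded} it discards all of them). So the first observation is: if $\cd$ is \emph{not} the subproblem the rule acts on, then $\cd \in P'$ and $B$ itself witnesses condition~\ref{cond:l1:ii} --- except for \tr{Unbounded}, which is handled uniformly below. It remains to treat, rule by rule, the case where $\cd$ is the acted-on subproblem.

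Two facts are used repeatedly. First, for an integer assignment, entailment of an integer linear formula coincides with ordinary satisfaction, so it suffices to reason about $B$ satisfying constraints. Second, entailment only weakens as the theory grows: $X \modelsz Y$ implies $X \modelszt Y$, and $X \modelst Y$ implies $X \modelszt Y$; combined with transitivity of $\modelszt$ and $T$-consistency of $B$, this lets me push any side-condition of the form ``the added constraint is entailed'' down to ``$B$ satisfies the added constraint''. For \tr{Learn} and \tr{Propagate} the side-condition $C \wedge D \modelsz c$ (resp.\ $\modelsz d$) then gives that $B$ satisfies the new constraint, so $B$ is a $T$-model of the strengthened subproblem, which lies in $P'$; since $A' = A$, condition~\ref{cond:l1:ii} holds. \ttr{Learn} is the same, except one first chains $C \wedge D \modelsz F$ and $F \wedge I \modelst c$ (both weakened to $\modelszt$) together with the trivial $C \wedge D \wedge I \modelszt F \wedge I$ to obtain $C \wedge D \wedge I \modelszt c$, and then applies this to $B$. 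For \tr{Forget}, $B$ satisfies the smaller constraint set a fortiori, so again condition~\ref{cond:l1:ii}. For \tr{Branch}, $B$ satisfies $D$ and $C$ and $D \modelsz (C \Leftrightarrow \bigvee_i C_i)$, so $B$ satisfies some $C_j$; hence $B$ is a $T$-model of $\subp{C_j}{D} \wedge I$ with $\subp{C_j}{D} \in P'$: condition~\ref{cond:l1:ii}.

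The remaining three rules yield condition~\ref{cond:l1:i} (or are vacuous). For \tr{Drop}, $\cd$ cannot be the acted-on subproblem, since $C \wedge D$ is integer-inconsistent while $B$ satisfies it; so this case does not arise. For \tr{Prune}, $B$ satisfies $C \wedge D$, so by the defining property of $\lb{\cd}$ we get $\obj{B} \geq \lb{\cd} \geq \obj{A} = \obj{A'}$. For \tr{Retire}, $B$ is a $T$-model of $C \wedge D \wedge I$, so the optimality side-condition of the rule, instantiated at $B$, gives $\obj{A'} \leq \obj{B}$. Finally, for \tr{Unbounded} the new incumbent carries the superscript $-\infty$, so its objective value is $-\infty \leq \obj{B}$, and condition~\ref{cond:l1:i} holds regardless of which subproblem $\cd$ was.

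I expect the only real subtlety to be the bookkeeping across the three entailment relations $\modelsz$, $\modelst$, $\modelszt$ --- in particular the \ttr{Learn} case, where the bridge formula $F$ must be used to splice an arithmetic-level entailment and a $T$-level entailment into a single $(\Z \cup T)$-level entailment before it can be applied to $B$; this is where disjointness of $\Sigma$ and $\Sigma_\Z$ together with $T$-consistency of $B$ matter. The \tr{Branch} case is the only one where the shape of $P'$ changes nontrivially, but it reduces to the elementary fact that a satisfying assignment of the parent subproblem must satisfy one of the children. Everything else is routine.
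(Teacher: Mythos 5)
Your proof is correct and follows essentially the same route as the paper's: a case split on the rule applied, with \tr{Branch}, \tr{Learn}, \tr{Forget}, \tr{Propagate}, and \ttr{Learn} yielding condition~(ii), \tr{Prune}, \tr{Retire}, and \tr{Unbounded} yielding condition~(i) via the rules' side-conditions, and \tr{Drop} being vacuous because $B$ satisfies $C \wedge D$. The only difference is that you spell out the $\modelsz$/$\modelst$/$\modelszt$ bookkeeping for \ttr{Learn} that the paper compresses into the single claim $C \wedge D \wedge I \modelszt C' \wedge D'$.
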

\begin{proof}
  Assume that there is a subproblem $p = \cd \in P$ and an assignment
  $B$ such that $B$ is a $T$-model of $C \wedge D \wedge I$. If $p \in
  P'$, then~(\ref{cond:l1:ii}) holds trivially. If $p \notin P'$, then
  the transition cannot possibly be \tr{Drop} (we cannot apply
  \tr{Drop} on {\cd} because $B$ is an assignment that satisfies $C
  \wedge D$). For the other rules:
  \begin{itemize}
  \item \tr{Branch}: There are subproblems $\subp{C_1}{D}$, $\ldots$,
    $\subp{C_n}{D}$ in $P'$ such that $D \modelsz (C \Leftrightarrow
    \bigvee_{1 \leq i \leq n} C_i)$.  Thus, $B$ is a $T$-model of $C_i
    \wedge D \wedge I$ for some $i$ ($1 \leq i \leq
    n$). $\subp{C_i}{D} \in P'$, therefore~(\ref{cond:l1:ii}) holds.
  \item \tr{Prune}: $\obj{A'} \leq \lb{C \wedge D} \leq
    \obj{B}$. Condition~(\ref{cond:l1:i}) holds.
  \item \tr{Retire}, \tr{Unbounded}: $\obj{A'} \leq \obj{B}$,
    therefore~(\ref{cond:l1:i}) holds.
  \item \tr{Learn}, \tr{Forget}, \tr{Propagate}, \ttr{Learn}: there is
    a subproblem $\subp{C'}{D'} \in P'$ such that $C \wedge D \wedge I
    \modelszt C' \wedge D'$, and therefore $B$ is a $T$-model of $C'
    \wedge D' \wedge I$.
  \end{itemize}
\end{proof}

\begin{lemma}
  \label{lemma:assignlt}
  For states {\pa} and {\paprime} such that $\pa \trans \paprime$,
  $\obj{A'} \leq \obj{A}$.
\end{lemma}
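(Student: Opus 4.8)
The statement to prove is Lemma~\ref{lemma:assignlt}: whenever $\pa \trans \paprime$, we have $\obj{A'} \leq \obj{A}$. The natural approach is a straightforward case analysis on which \bct{} transition rule was applied, using the definitions $\obj{\none} = +\infty$ and $\obj{A^{-\infty}} = -\infty$ established earlier. I would first observe that the rules split into two groups: those that leave the incumbent assignment untouched (so $A' = A$ and the inequality is trivial equality), and the two rules that modify it.

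**Step 1: the rules that do not change $A$.** For \tr{Branch}, \tr{Learn}, \tr{Forget}, \tr{Propagate}, \tr{Drop}, \tr{Prune}, and \ttr{Learn}, inspection of Figure~\ref{fig:bct-trans} shows that the assignment component of the state is literally the same $A$ on both sides of the arrow. Hence $A' = A$ and $\obj{A'} = \obj{A} \leq \obj{A}$. This case needs only a sentence.

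**Step 2: the rules that change $A$.** Only \tr{Retire} and \tr{Unbounded} replace $A$. For \tr{Retire}, the side condition explicitly requires $\obj{A'} < \obj{A}$, so in particular $\obj{A'} \leq \obj{A}$. For \tr{Unbounded}, the new state's assignment is $A'^{\,-\infty}$, and by definition $\obj{A'^{\,-\infty}} = -\infty \leq \obj{A}$ regardless of what $A$ is (including $A = \none$, where $\obj{A} = +\infty$). So both cases are immediate from the side conditions together with the conventions on $\obj{\cdot}$.

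**Main obstacle.** There is essentially no obstacle here — the lemma is a bookkeeping fact that falls out directly from reading the transition rules. The only thing to be careful about is handling the $\none$ and $-\infty$ annotations uniformly via the stipulated values $\obj{\none} = +\infty$ and $\obj{A^{-\infty}} = -\infty$, and noting in the \tr{Unbounded} case that the conclusion holds even when the previous incumbent $A$ was $\none$. I would write the proof as a short itemized walk through the rules, grouping the seven ``no change'' rules together and then dispatching \tr{Retire} and \tr{Unbounded} individually.
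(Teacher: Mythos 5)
Your proposal is correct and matches the paper's own proof, which likewise observes that only \tr{Retire} and \tr{Unbounded} modify the assignment, that their side conditions give $\obj{A'} \leq \obj{A}$, and that every other rule leaves $A$ unchanged. Your added care about the $\none$ and $-\infty$ conventions is consistent with the paper's definitions and does not change the argument.
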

\begin{proof}
  The only rules that modify the assignment are \tr{Retire} and
  \tr{Unbounded}; the conditions under which we can apply them imply
  $\obj{A'} \leq \obj{A}$.  For any other rule, $A = A'$.
\end{proof}

\begin{lemma}
  \label{lemma:assigndiff}
  For states {\pa} and {\paprime} such that $\pa \trans \paprime$, if
  $A \neq A'$ then $A'$ is a $T$-model of $C \wedge D \wedge I$ for
  some $\cd{} \in P$.
\end{lemma}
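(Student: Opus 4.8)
The plan is to proceed by a straightforward case analysis on the rule of Figure~\ref{fig:bct-trans} that justifies the transition $\pa \trans \paprime$. The crucial observation, already exploited in the proof of Lemma~\ref{lemma:assignlt}, is that the second (assignment) component of a state is altered by exactly two rules, \tr{Retire} and \tr{Unbounded}. For every other rule --- \tr{Branch}, \tr{Learn}, \tr{Forget}, \tr{Propagate}, \tr{Drop}, \tr{Prune}, and \ttr{Learn} --- the right-hand side keeps the same assignment, so $A = A'$ and the hypothesis $A \neq A'$ is never met. Hence it suffices to treat \tr{Retire} and \tr{Unbounded}, and in both cases the conclusion will fall out directly from the rule's side condition.

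For \tr{Retire}, the transition instantiates as $\state{P_0 \uplus \{ \cd \}}{A} \trans \state{P_0}{A'}$, so in the notation of the lemma the source state's subproblem set is $P = P_0 \uplus \{ \cd \}$; in particular $\cd \in P$. The side condition of \tr{Retire} states verbatim that $A'$ is a $T$-model of $C \wedge D \wedge I$, which is exactly what the lemma asserts, with the retired subproblem $\cd \in P$ serving as the witness.

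For \tr{Unbounded}, the transition instantiates as $\state{P_0 \uplus \{ \cd \}}{A} \trans \state{\emptyset}{A'^{\, -\infty}}$, so again $\cd \in P = P_0 \uplus \{ \cd \}$. Its side condition likewise guarantees that $A'$ is a $T$-model of $C \wedge D \wedge I$. Here one must note that the superscript $-\infty$ is only a bookkeeping annotation recording that the objective is unbounded below on the feasible region; it does not change which formulas the underlying assignment function satisfies. So the assignment underlying $A'^{\, -\infty}$ is a $T$-model of $C \wedge D \wedge I$ for $\cd \in P$, as required.

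I do not expect any genuine obstacle: the statement is essentially a repackaging of the two side conditions in Figure~\ref{fig:bct-trans}. The only points demanding a little care are the notational overloading of $P$ between the lemma statement and the rule schemas --- handled above by writing the schema's set as $P_0$ and setting $P = P_0 \uplus \{\cd\}$ --- and, in the \tr{Unbounded} case, the need to observe explicitly that annotating an assignment with $-\infty$ leaves its model-theoretic behavior untouched.
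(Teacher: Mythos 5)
Your proof is correct and follows essentially the same route as the paper's: observe that only \tr{Retire} and \tr{Unbounded} can change the assignment, and read the conclusion directly off their side conditions. The extra remarks about the $-\infty$ annotation and the notational overloading of $P$ are harmless elaborations of the same argument.
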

\begin{proof}
  The conditions of \tr{Retire} and \tr{Unbounded} guarantee that $A'$
  is a $T$-model of $C \wedge D \wedge I$ for some $\cd \in P$. No
  other rule modifies the assignment.
\end{proof}

\begin{lemma}
  \label{lemma:l2}
  For states {\pa} and {\paprime} such that $\pa \transstar \paprime$,
  if there is an assignment $B$ such that $B$ is a $T$-model of $C
  \wedge D \wedge I$ for some $\cd \in P$, then one of the following
  conditions holds: $\obj{A'} \leq \obj{B}$, or $B$ is a $T$-model of
  $C' \wedge D' \wedge I$ for some $\cdprime \in P'$.
\end{lemma}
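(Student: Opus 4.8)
The plan is to prove Lemma~\ref{lemma:l2} by induction on the number $n$ of transition steps in the derivation $\pa \transstar \paprime$, using Lemma~\ref{lemma:l1} as the single-step building block and Lemma~\ref{lemma:assignlt} to propagate objective-value bounds. This is essentially just taking the reflexive–transitive closure of Lemma~\ref{lemma:l1}, so the work is bookkeeping rather than a new idea.

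For the base case $n = 0$ we have $\pa = \paprime$, so $P = P'$ and $A = A'$; the hypothesis already supplies some $\cd \in P = P'$ with $B$ a $T$-model of $C \wedge D \wedge I$, which is exactly the second disjunct of the conclusion.

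For the inductive step I would decompose the derivation as $\pa \transstar \state{P''}{A''} \trans \paprime$, where the prefix has length $n-1$ (this is possible by the definition of $\transplus$). Apply the induction hypothesis to $\pa \transstar \state{P''}{A''}$ with the \emph{same} witness $B$; this gives two cases. If $\obj{A''} \leq \obj{B}$, then Lemma~\ref{lemma:assignlt} applied to the final step $\state{P''}{A''} \trans \paprime$ gives $\obj{A'} \leq \obj{A''} \leq \obj{B}$, so the first disjunct holds. Otherwise $B$ is a $T$-model of $C'' \wedge D'' \wedge I$ for some $\subp{C''}{D''} \in P''$; now invoke Lemma~\ref{lemma:l1} on the single step $\state{P''}{A''} \trans \paprime$ with witness $B$, obtaining either $\obj{A'} \leq \obj{B}$ or a $\cdprime \in P'$ with $B$ a $T$-model of $C' \wedge D' \wedge I$ — in either case the conclusion holds.

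There is no genuine obstacle here; the only mild subtlety is to peel the \emph{last} step off the derivation rather than the first, so that both sub-cases of the inductive step reduce to a single application of an already-proved single-step lemma (Lemma~\ref{lemma:l1} or Lemma~\ref{lemma:assignlt}) and one needs no separate ``monotonicity of $obj$ along $\transstar$'' corollary. Beyond that, one just has to be careful to thread the one fixed witness $B$ through the case analysis.
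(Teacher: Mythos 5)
Your proof is correct and matches the paper's argument essentially verbatim: both induct on the length of the transition sequence, peel off the \emph{last} step, apply the induction hypothesis to the prefix, and then close the two cases with Lemma~\ref{lemma:assignlt} and Lemma~\ref{lemma:l1} respectively. Nothing to add.
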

\begin{proof}
  We induct on the length $n$ of the sequence of transitions.

  \begin{itemize}
  \item Induction base: $n = 0$. $\pa = \paprime$; obvious.
  \item Induction step: assume that the property holds for any
    sequence of $n - 1$ transitions, where $n \geq 1$. We will prove
    that it holds for any sequence of transitions $\state{P_0}{A_0}
    \trans ... \trans \state{P_{n-1}}{A_{n-1}} \trans
    \state{P_n}{A_n}$.  Assume there is an assignment $B$ such that
    $B$ is a $T$-model of $C_0 \wedge D_0 \wedge I$ for some
    $\subp{C_0}{D_0} \in P_0$. By the induction hypothesis, one of the
    two following conditions holds:
    \begin{itemize}
    \item $\obj{A_{n - 1}} \leq \obj{B}$: then $\obj{A_n} \leq
      \obj{A_{n - 1}} \leq \obj{B}$ from
      Lemma~\ref{lemma:assignlt}.
    \item $B$ is a $T$-model of $C_{n - 1} \wedge D_{n - 1} \wedge I$
      for some subproblem $\subp{C_{n - 1}}{D_{n - 1}} \in P_{n - 1}$:
      our proof obligation follows from Lemma~\ref{lemma:l1} applied
      to the transition $\state{P_{n-1}}{A_{n-1}} \trans
      \state{P_n}{A_n}$.
    \end{itemize}
  \end{itemize}

\end{proof}

\begin{lemma}
  \label{lemma:entails}
  For states {\pa} and {\paprime} such that $\pa \trans \paprime$,
  $\bigvee_{\cd \in P'} C \wedge D \modelsz \bigvee_{\cd \in P} C
  \wedge D$.
\end{lemma}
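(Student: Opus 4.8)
The plan is to argue by cases on which \bct{} rule justifies the transition $\pa \trans \paprime$. The statement says that the target state's subproblems are collectively no more permissive than the source state's: every $\Z$-interpretation satisfying some $\cdprime \in P'$ already satisfies some $\cd \in P$. In every case I will exhibit, for each subproblem $\cdprime \in P'$, a subproblem $\cd \in P$ with $C' \wedge D' \modelsz C \wedge D$; since a disjunction $\bigvee_j \psi_j$ $\Z$-entails a disjunction whenever each $\psi_j$ $\Z$-entails one of its disjuncts, this gives the claimed $\bigvee_{\cdprime \in P'} C' \wedge D' \modelsz \bigvee_{\cd \in P} C \wedge D$.

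The rules that only delete a subproblem --- \tr{Drop}, \tr{Prune}, \tr{Retire}, and \tr{Unbounded} --- are immediate: for these $P' \subseteq P$ (indeed $P' = \emptyset$ for \tr{Unbounded}), so the left disjunction is a sub-disjunction of the right one (the empty disjunction being $\Z$-unsatisfiable and hence $\Z$-entailing anything), and no side condition is needed. The rules that replace one subproblem by a strengthening of it --- \tr{Learn} and \ttr{Learn} (both adding a constraint $c$) and \tr{Propagate} (adding a simple equality $d$) --- are almost as easy: writing $P = P_0 \uplus \{\cd\}$, the subproblems in $P_0$ recur verbatim in $P'$, and the single new subproblem satisfies $C\ c \wedge D \modelsz C \wedge D$ (resp. $C \wedge D\ d \modelsz C \wedge D$) simply because conjoining an extra literal can only strengthen a formula; again the side conditions play no role here.

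This leaves \tr{Forget} and \tr{Branch}, the two rules whose side conditions actually matter. For \tr{Forget}, $P = P_0 \uplus \{\subp{C\ c}{D}\}$ and $P' = P_0 \cup \{\cd\}$, and the side condition is precisely $C \wedge D \modelsz c$; hence $C \wedge D \modelsz C \wedge c \wedge D = C\ c \wedge D$, so the retained (superficially weaker) subproblem still $\Z$-entails the one that was removed, and the $P_0$ part matches up as before. For \tr{Branch}, $P = P_0 \uplus \{\cd\}$ and $P' = P_0 \cup \{\subp{C_i}{D} \mid 1 \le i \le n\}$ with side condition $D \modelsz (C \Leftrightarrow \bigvee_{1 \le i \le n} C_i)$; from this I obtain $D \wedge C_i \modelsz C$ for each $i$ (because $C_i$ entails $\bigvee_j C_j$, which $D$ makes equivalent to $C$), so every new subproblem satisfies $C_i \wedge D \modelsz C \wedge D$, and, combined with the unchanged $P_0$ disjuncts, the claim follows. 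I expect \tr{Branch} to be the only step requiring any care, since it is the one place where the biconditional side condition has to be unpacked in the right direction; everything else is bookkeeping about which disjuncts survive a transition.
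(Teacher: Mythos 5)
Your proposal is correct and follows essentially the same route as the paper's proof: a case analysis on the transition rules, observing that deletion rules shrink the disjunction, that \tr{Learn}/\ttr{Learn}/\tr{Propagate}/\tr{Forget} replace a subproblem by one that $\Z$-entails it (the paper phrases the first three together with \tr{Forget} as substitution by a $\Z$-equivalent subproblem), and that \tr{Branch} uses the biconditional side condition to get $C_i \wedge D \modelsz C \wedge D$. Your version is slightly more explicit about which side conditions are actually needed for this direction of entailment, but the argument is the same.
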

\begin{proof}
  We case-split on the \bct{} transitions.
  \begin{itemize}
  \item The rules \tr{Prune}, \tr{Drop}, and \tr{Retire} can only make
    the disjunction of the subproblems in $P'$ stronger, because a
    subproblem is eliminated and the rest of the subproblems remain
    intact.
  \item For \tr{Unbounded}, $\bigvee_{\cd \in P'} C \wedge D \equiv
    \mathtt{false}$; $\mathtt{false} \modelsz \bigvee_{\cd \in P} C
    \wedge D$.
  \item The rules \tr{Learn}, \tr{Forget}, and \tr{Propagate} and
    substitute a subproblem for a $\Z$-equivalent one.
  \item The rule \ttr{Learn} adds a constraint to a subproblem, and
    therefore makes the disjunction in $P'$ stronger.
  \item The rule \tr{Branch} replaces a subproblem \cd{} with a set of
    subproblems whose disjunction is $\Z$-equivalent to $C \wedge D$.
  \end{itemize}
\end{proof}

\begin{lemma}
  \label{lemma:entailsstar}
  For states {\pa} and {\paprime} such that $\pa \transstar \paprime$,
  $\bigvee_{\cd \in P'} C \wedge D \modelsz \bigvee_{\cd \in P} C
  \wedge D$.
\end{lemma}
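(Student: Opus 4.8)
The statement is the reflexive-transitive-closure version of Lemma~\ref{lemma:entails}, so the natural approach is a straightforward induction on the length $n$ of the transition sequence $\pa \transstar \paprime$, mirroring the pattern already used to lift Lemma~\ref{lemma:l1} to Lemma~\ref{lemma:l2}. The plan is to prove that for every $n \geq 0$, if $\state{P_0}{A_0} \trans \cdots \trans \state{P_n}{A_n}$, then $\bigvee_{\cd \in P_n} C \wedge D \modelsz \bigvee_{\cd \in P_0} C \wedge D$.

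For the base case $n = 0$ we have $P_0 = P_n$, so the entailment is the trivial $\Phi \modelsz \Phi$. For the induction step, assume the property for sequences of length $n-1$ and consider a sequence $\state{P_0}{A_0} \trans \cdots \trans \state{P_{n-1}}{A_{n-1}} \trans \state{P_n}{A_n}$. By the induction hypothesis applied to the prefix of length $n-1$, we get $\bigvee_{\cd \in P_{n-1}} C \wedge D \modelsz \bigvee_{\cd \in P_0} C \wedge D$. By Lemma~\ref{lemma:entails} applied to the single transition $\state{P_{n-1}}{A_{n-1}} \trans \state{P_n}{A_n}$, we get $\bigvee_{\cd \in P_n} C \wedge D \modelsz \bigvee_{\cd \in P_{n-1}} C \wedge D$. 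The conclusion then follows by transitivity of $\modelsz$: $\bigvee_{\cd \in P_n} C \wedge D \modelsz \bigvee_{\cd \in P_0} C \wedge D$.

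I anticipate no real obstacle here; the only point requiring a line of justification is the transitivity of $\modelsz$ (i.e.\ $\Z$-entailment), which is immediate from its definition via $\Z$-unsatisfiability of $\Phi \wedge \neg \Psi$. The proof is essentially a two-line chaining argument, and the main care needed is purely notational — keeping the bound variables in the indexed disjunctions distinct from the subproblem components $C, D$ being quantified over, exactly as in the statement of Lemma~\ref{lemma:entails}.
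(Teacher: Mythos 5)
Your proof is correct and follows exactly the paper's approach: the paper's own proof is the one-liner ``induction on the length of the sequence of transitions and application of Lemma~\ref{lemma:entails}.'' Your version merely spells out the base case, the inductive step, and the (immediate) transitivity of $\modelsz$, all of which are sound.
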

\begin{proof}
  Induction on the length of the sequence of transitions and
  application of Lemma~\ref{lemma:entails}.
\end{proof}

\begin{theorem}
  \label{thm:sound-sat}
  For a formula $C$ and an assignment $A$, if
  $$\statenone{\setsp{C}{\emptyset}} \transstar \state{\emptyset}{A}$$
  where $A \neq \none$, then
  \begin{inparaenum}[(a)]
  \item $A$ is a $T$-model of $C \wedge I$, and
  \item there is no assignment $B$ such that $B$ is a $T$-model of $C
    \wedge I$ and $\obj{B} < \obj{A}$.
  \end{inparaenum}
\end{theorem}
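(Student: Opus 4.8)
The plan is to prove the two conjuncts (a) and (b) separately, reusing the lemmas already established; the overall shape mirrors the proof of Theorem~\ref{thm:sound-unsat}. Throughout, fix a derivation
$\statenone{\setsp{C}{\emptyset}} = \state{P_0}{A_0} \trans \state{P_1}{A_1} \trans \cdots \trans \state{P_m}{A_m} = \state{\emptyset}{A}$, so that $A_0 = \none$ and $A_m = A \neq \none$.

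Part (b) is essentially immediate from Lemma~\ref{lemma:l2}. Suppose toward a contradiction that $B$ is a $T$-model of $C \wedge I$ with $\obj{B} < \obj{A}$. Since $\subp{C}{\emptyset} \in P_0$ and $B$ is a $T$-model of $C \wedge \emptyset \wedge I$, Lemma~\ref{lemma:l2} applied to $\state{P_0}{A_0} \transstar \state{\emptyset}{A}$ gives that either $\obj{A} \leq \obj{B}$, or $B$ is a $T$-model of $C' \wedge D' \wedge I$ for some $\cdprime \in \emptyset$. The latter disjunct is vacuous, so $\obj{A} \leq \obj{B}$, contradicting the choice of $B$.

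For part (a), the key move is to look back to the transition at which the assignment was last installed, rather than at the final state (whose empty subproblem set carries no information about $A$). Since $A_0 = \none \neq A = A_m$, there is a largest index $k$ with $A_{k-1} \neq A_k$, and then $A = A_k = \cdots = A_m$. By Lemma~\ref{lemma:assigndiff} applied to $\state{P_{k-1}}{A_{k-1}} \trans \state{P_k}{A_k}$, the assignment $A$ is a $T$-model of $C' \wedge D' \wedge I$ for some $\cdprime \in P_{k-1}$; in particular $A$ is $T$-consistent, $A \modelszt I$, and $A$ satisfies $C' \wedge D'$. Now apply Lemma~\ref{lemma:entailsstar} to $\state{P_0}{A_0} \transstar \state{P_{k-1}}{A_{k-1}}$: since $\bigvee_{\cd \in P_0} C \wedge D$ is just $C$, we get $\bigvee_{\cd \in P_{k-1}} C \wedge D \modelsz C$. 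As $A$ satisfies one of these disjuncts it satisfies the whole disjunction, hence satisfies $C$; together with $A \modelszt I$ and $T$-consistency, this makes $A$ a $T$-model of $C \wedge I$, as required.

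I expect the only delicate point to be this last step: one must resist reasoning directly about $\state{\emptyset}{A}$ and instead recover the constraints $A$ satisfies from the subproblem that was active when \tr{Retire} or \tr{Unbounded} fired, then transport those constraints back to the original $C$ using the ``disjunction of subproblems only gets stronger'' lemma (Lemma~\ref{lemma:entailsstar}). The remaining details --- e.g., that an integer assignment which is a $T$-model of a formula satisfies its integer-linear part, and that $\Z$-entailment among integer linear formulas lifts to $\Z \cup T$-entailment --- are routine. Note that the monotonicity lemma (Lemma~\ref{lemma:assignlt}) is not needed for this proof, since $A$ is literally unchanged from step $k$ onward.
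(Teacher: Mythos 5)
Your proof is correct and follows essentially the same route as the paper's: part~(b) is discharged by Lemma~\ref{lemma:l2}, and part~(a) locates the last assignment-changing (\tr{Retire}/\tr{Unbounded}) step, extracts that $A$ is a $T$-model of the subproblem active there, and transports this back to $C$ via Lemma~\ref{lemma:entailsstar}. The only cosmetic difference is that you invoke Lemma~\ref{lemma:assigndiff} where the paper appeals directly to the side conditions of \tr{Retire} and \tr{Unbounded}.
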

\begin{proof}
  \begin{enumerate}[(a)]

  \item The sequence of transitions from
    \statenone{\setsp{C}{\emptyset}} to \state{\emptyset}{A} has to be
    of the following form:
    
    \[
    \statenone{\setsp{C}{\emptyset}} \transstar \state{S_1}{A_1}
    \transannot{Retire / Unbounded}
    \state{S}{A} \transstar \state{\emptyset}{A}
    \]
    
    There is at least one \tr{Retire} or \tr{Unbounded} step, as these
    are the only rules that can modify the assignment. Consider the
    last such step. The conditions on \tr{Retire} and \tr{Unbounded}
    steps require that $A$ is a $T$-model of $I \wedge \bigvee_{\cd
      \in S_1} (C \wedge D)$. From Lemma~\ref{lemma:entailsstar},
    $\bigvee_{\cd \in S_1} (C \wedge D) \modelsz C$. Therefore, $A$
    is a $T$-model of $C$. \vspace{6pt}

  \item Follows from Lemma~\ref{lemma:l2}.
  
  \end{enumerate}
\end{proof}

\begin{theorem}
  \label{thm:sound-unsat}
  For a formula $C$, if $\statenone{\setsp{C}{\emptyset}} \transstar
  \statenone{\emptyset}$, then $C \wedge I$ is \ZT{}-unsatisfiable.
\end{theorem}
\begin{proof}
  Assume that there is an assignment $A$ such that $A$ is a $T$-model
  of $C$. Then, from Lemma~\ref{lemma:l2} either there exists
  $\subp{C'}{D'} \in \emptyset$ such that $A$ is a $T$-model of $C'
  \wedge D' \wedge I$ (which cannot possibly be true), or $+\infty =
  \obj{\none} \leq \obj{A}$ (contradiction, because $\obj{A}$ has to
  be finite).
\end{proof}

\begin{definition}[Stably-Infinite Theory]
  A $\Sigma$-theory $T$ is called stably-infinite if for every
   $T$-satisfiable quantifier-free $\Sigma$-formula $F$ there exists an
   interpretation satisfying $F \wedge T$ whose domain is infinite.
 \end{definition}

 \begin{definition}[Arrangement]
   Let $E$ be an equivalence relation over a set of variables $V$. The
   set
   \[
     \alpha(V, E) =\ \set{x = y}{x E y}\ \cup\ \set{x \neq y}{x, y \in
       V \text{ and not } x E y}
     \]
   is the arrangement of $V$ induced by $E$.
\end{definition}

Note that $\Z$ is a stably-infinite theory. We build upon the
following result on the combination of signature-disjoint
stably-infinite theories:

\begin{fact}[Combination of Stably-Infinite
  Theories~\cite{no79,noproof,combiningdp}]
  Let $T_i$ be a stably-infinite $\Sigma_i$-theory, for $i = 1, 2$,
  and let $\Sigma_1 \cap \Sigma_2 = \emptyset$. Also, let $\Gamma_i$
  be a conjunction of $\Sigma_i$ literals. $\Gamma_1 \cup \Gamma_2$ is
  $(T_1 \cup T_2)$-satisfiable iff there exists an equivalence
  relation $E$ of the variables shared by $\Gamma_1$ and $\Gamma_2$
  such that $\Gamma_i \cup \alpha(V, E)$ is $T_i$-satisfiable, for $i
  = 1, 2$.
  \label{thm:no}
\end{fact}

Decidability for the combination of $T_1 = \Z$ and another
stably-infinite theory follows trivially, as we can pick an
arrangement over the variables shared by the two sets of literals
non-deterministically and perform two $T_i$-satisfiability checks.

\begin{theorem}[Completeness]
  \label{thm:complete}
  \bct{} provides a complete optimization procedure for the ILP Modulo
  $T$ problem, where $T$ is a decidable stably-infinite theory.
\end{theorem}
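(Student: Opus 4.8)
The plan is to show that for any ILP Modulo $T$ instance $C, I, O$ with $T$ decidable and stably infinite, there is a terminating $\longrightarrow$-sequence from the starting state $\setsp{C}{\emptyset}$ (with incumbent $\none$) to some final state $\state{\emptyset}{A}$, and that this final state correctly classifies the instance. Soundness (Theorems~\ref{thm:sound-unsat} and \ref{thm:sound-sat}) already guarantees that \emph{whatever} final state we reach is correct; so the real content is \emph{constructing} a transition strategy that (i) respects the termination conditions of Theorem~\ref{thm:termination}, and (ii) is guaranteed to empty $P$ while leaving the optimal incumbent in $A$.

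First I would fix a branching discipline on a chosen subproblem \cd{} that drives it toward a state where the core ILP solver can decide integer-feasibility of $C\wedge D$ in finitely many \tr{Branch} steps (standard branch-and-bound on bounded integer variables, using \tr{Learn} only for cuts drawn from a finite, a priori fixed family so that condition~(\ref{cond:b}) of Theorem~\ref{thm:termination} holds, and \tr{Propagate} to expose all entailed simple equalities between the interface variables). Once a leaf subproblem \cd{} has an integer solution, invoke the Nelson–Oppen machinery: for every equivalence relation $E$ on the interface variables of that subproblem, the Combination theorem says $C\wedge D\wedge I$ has a $T$-model with arrangement $\alpha(V,E)$ iff the $\Z$-part (which $C\wedge D$ together with the equalities/disequalities of $\alpha(V,E)$ determines) and the $T$-part $I\wedge\alpha(V,E)$ are each separately satisfiable; decidability of $T$ and of $\Z$ makes both checks effective. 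If some arrangement works, \ttr{Learn} the corresponding implied equalities (justified by taking $F=\alpha(V,E)$, so $C\wedge D\modelsz F$ after we have branched/propagated enough that $D$ pins the arrangement, and $F\wedge I\modelst$ the learned equality), producing a subproblem all of whose solutions are $T$-models; then \tr{Retire} (or \tr{Unbounded}) it. If \emph{no} arrangement works, the leaf is $T$-infeasible, so there exists an implied $\Z$-formula $F$ with $F\wedge I\modelst \mathtt{false}$, hence $F\wedge I\modelst c$ for a constraint $c$ (\eg $0\geq 1$) making the subproblem integer-inconsistent; \ttr{Learn} that $c$ and then \tr{Drop}. Either way the leaf is removed, so iterating over the finitely many leaves empties $P$.

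For optimality I would order the exploration so that the incumbent $A$ is updated via \tr{Retire} whenever a better $T$-model is found, and use \tr{Prune} to discard subproblems whose lower bound $\lb{\cd}$ is already $\geq\obj{A}$; \tr{Unbounded} handles the case where some subproblem admits $T$-models of arbitrarily low objective value, immediately producing the final state $\state{\emptyset}{A'^{-\infty}}$. Combined with Lemma~\ref{lemma:l2} (no $T$-model of any retired/dropped subproblem is ever lost without the incumbent already dominating it) and Lemma~\ref{lemma:entailsstar} (the disjunction of live subproblems always entails $C$), this shows the incumbent in the final state is an optimal $T$-model when one exists, $\none$ when $C\wedge I$ is $T$-unsatisfiable, and $A'^{-\infty}$ when the objective is unbounded below.

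The main obstacle is step (ii)–(iii) of the construction: arranging matters so that the \ttr{Learn} side-condition ``$\exists F:\ C\wedge D\modelsz F,\ F\wedge I\modelst c$'' can actually be met at a leaf. This requires that by the time we examine a leaf, $D$ (a set of \emph{simple} equalities) already determines a full arrangement of the interface variables — i.e., that \tr{Branch}/\tr{Propagate} can force every interface pair to be either equal-with-offset or separated — and that the finitely many cut/branch moves suffice to reach such leaves without violating the termination conditions. Showing this is consistent with Theorem~\ref{thm:termination} (finitely many \tr{Branch} steps, no infinite \tr{Learn}/\ttr{Learn} tail) is the delicate accounting; the rest is assembling the already-proved soundness and termination lemmas with the Nelson–Oppen combination theorem.
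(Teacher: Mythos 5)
Your proposal is correct and follows essentially the same route as the paper's proof sketch: branch until each subproblem $\Z$-entails an arrangement of the interface variables, invoke the combination theorem for stably infinite theories to reduce each leaf to a decidable $T$-check, dispose of $T$-unsatisfiable leaves via \ttr{Learn} of a contradictory constraint followed by \tr{Drop}, finish the $T$-satisfiable leaves with a complete ILP method (plus \tr{Prune}/\tr{Retire}/\tr{Unbounded}), and appeal to the soundness and termination theorems. The only detail you leave implicit, which the paper handles with the same branching device it uses for arrangements, is fixing the polarity of each annotated interface atom, \ie branching so that $C \wedge D \modelsz v > 0$ or $C \wedge D \modelsz v \leq 0$ for every annotation variable $v$ before the $T$-check is issued.
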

\begin{proof}[Sketch]
  Let $\mathcal{C}, I, O$ be an ILP Modulo $T$ instance. Assume that
  $$\statenone{\setsp{\mathcal{C}}{\emptyset}} \transstar \pa\text{,}$$
  and that for every $\cd \in P$ the following conditions hold:
  \begin{inparaenum}[(a)]
  \item there is an equivalence relation $E_D$ over the set of
    interface variables $V$ of the ILP Modulo $T$ instance, such that
    $D \modelsz \alpha(V, E_D)$, and
  \item either $D \modelsz v > 0$ or $D \modelsz v \leq 0$ for every
    $v$ that appears as the annotation of an interface atom in $I$.
  \end{inparaenum}
  Then we can solve the IMT instance to optimality as follows. For
  every subproblem $\cd \in P$, ~$C \wedge D \wedge I$~ \ZT{}-entails
  the following set of literals:
  \begin{align*}
    \set{\phi}{\phi \in I \text{ and } \phi \text{ is not annotated}}
    &\ \cup \\
    \set{\phi}{\phi^v \in I \text{ and } D \modelsz v > 0}
    &\ \cup \\
    \set{\neg \phi}{\phi^v \in I \text{ and } D \modelsz v \leq 0}
    &\ \cup \\
    \alpha(V, E_D)
  \end{align*}
  If the set of literals is $T$-unsatisfiable, then $C \wedge D \wedge
  I$ is \ZT{}-unsatisfiable. If it is $T$-satisfiable, any integer
  solution for $C \wedge D$ will be a $T$-model. For the
  $T$-unsatisfiable subproblems, we apply \ttr{Learn} to learn an
  integer-infeasible constraint (\eg $0 < 0$) and subsequently apply
  \tr{Drop}. If all the subproblems are $T$-unsatisfiable, we reach a
  final state \state{\emptyset}{A}. If there are $T$-satisfiable
  subproblems, it suffices to let a (complete) branch-and-cut ILP
  algorithm run to optimality, as we have already established
  $T$-consistency. The basic steps of such algorithms can be described
  by means of \bct{} steps. Note that unbounded objective functions do
  not hinder completeness: it suffices to recognize an unbounded
  subproblem~\cite{unboundedip} and apply \tr{Unbounded}.

  A systematic branching strategy can guarantee that after a finite
  number of steps, the difference constraints of every subproblem
  entail an arrangement. For every pair of interface variables $x$ and
  $y$ and every subproblem, we apply the \tr{Branch} rule to obtain
  three new subproblems, each of which contains one of the constraints
  $x - y < 0$, $x - y = 0$, and $x - y > 0$. The \tr{Propagate} rule
  then applies to all three subproblems. Similarly, we branch to
  obtain a truth value for $v > 0$ for every $v$ that appears as the
  annotation of an interface atom.
\end{proof}

\section{SMT as IMT}
\label{sec:smt}

In Section~\ref{sec:bct}, we provided a sound and complete
optimization procedure for the combination of ILP and a
stably-infinite theory (Theorems~\ref{thm:sound-unsat},
\ref{thm:sound-sat}, and \ref{thm:complete}). We will now demonstrate
how to deal with propositional structure, so that we can use this
procedure for SAT Modulo \ZT{} problems, where $T$ is stably-infinite.
In essence, our goal is to flatten propositional structure into linear
constraints.

\subsection{Bounding \ZT{} Instances}

As a prerequisite for dealing with propositional structure, we show
how to bound integer terms in quantifier-free formulas while
preserving \ZT{}-satisfiability.  We build upon well-known results for
ILP~\cite{combopt}. Similar ideas have been applied
to~\Z{}~\cite{lauclid}. Our results go beyond the bounds for~\Z{}, in
that we take into account background theories and objective functions.

Our starting point is known bounds for ILP instances of the form
\begin{equation}
  \begin{aligned}
    \text{min  } c\hspace{2pt} & \mathbf{x} \\
    \text{subject to  } A\hspace{2pt} & \mathbf{x} = b \\
    & \mathbf{x} \geq 0
  \end{aligned}
  \label{eq:ilpformorig}
\end{equation}
where $A$, $b$, and $c$ are matrices of integers ($m \times n$, $m
\times 1$, and $1 \times n$ respectively), and $\mathbf{x}$ is an
$n$-vector of integer variables.

\begin{fact}[{\cite[Corollary of Theorem 13.5]{combopt}}]
  \label{lemma:pap}
  If an ILP instance of the form~(\ref{eq:ilpformorig}) has a finite
  optimum, then it has an optimal solution $x$ such that $x_j \leq
  n^3[(m + 2) d]^{4 m + 12}$ for $1 \leq j \leq n$, where $d =
  \max(\max_{i, j} |A_{i,j}|, \max_i |b_i|, \max_i |c_i|)$.
\end{fact}

In what follows, it will be more convenient to work with the following
form:
\begin{equation}
  \begin{aligned}
    \text{min  } c\hspace{2pt} & \mathbf{x} \\
    \text{subject to  } A\hspace{2pt} & \mathbf{x} = b \\
    D\hspace{2pt} & \mathbf{x} \leq h
  \end{aligned}
  \label{eq:ilpform}
\end{equation}
where $A$, $D$, $b$, $h$ and $c$ are matrices of integers ($m \times
n$, $k \times n$, $m \times 1$, $k \times 1$ and $1 \times n$
respectively), and $\mathbf{x}$ is an $n$-vector of integer variables.

\begin{lemma}
  If an ILP instance I of the form~(\ref{eq:ilpform}) has a finite
  optimum, then it has an optimal solution $\mathbf{x}$ such that
  $|x_j| \leq (2 n + k)^3[(m + k + 2) d]^{4 m + 4 k + 12}$ for $1 \leq
  j \leq n$, where $d = max(\max_{i, j} |A_{i,j}|, \max_i |b_i|,
  \max_i |c_i|, \max_{i,j} |D_{i,j}|, \max_i |h_i|)$.
  \label{lemma:ilpbounds}
\end{lemma}

\begin{proof}
  We reduce $I$ to an equisatisfiable instance over a vector
  $\mathbf{x'}$ of $2 n$ variables constrained to be non-negative
  ($\mathbf{x'} \geq 0$). We achieve this by replacing each variable
  $x_i$ with $x'_i - x'_{n + i}$. In the resulting matrices $A'$, $D'$
  and $c'$, $x'_i$ appears with the same coefficients as $x_i$; $x'_{n
    + i}$ appears with the coefficients multiplied by $-1$. The
  resulting ILP instance $I'$ has $m + k$ constraints and $2 n$
  variables. We replace inequalities with equalities by introducing
  $k$ slack variables and obtain an equisatisfiable instance $I''$
  over a vector $\mathbf{x''}$ of $2 n + k$ variables (the last $k$ of
  which are the slack variables) and $m + k$ constraints.

  The chain of transformations preserves the maximum absolute
  coefficients. We can translate solutions of $I$ to solutions of
  $I''$ and vice versa via the equalities $x_i = x''_i - x''_{n + i}$,
  for $1 \leq i \leq n$; an optimal assignment in either side
  corresponds to an optimum in the other.  $I''$ has a finite optimum
  because $I$ does. By Lemma~\ref{lemma:pap}, $I''$ has an optimal
  solution $\mathbf{y''}$ such that $y''_j \leq (2 n + k)^3[(m + k +
  2) d]^{4 m + 4 k + 12}$. For the corresponding solution $y$ to $I$,
  we have $|y_i| = |y''_i - y''_{n + 1}| \leq (2 n + k)^3[(m + k + 2)
  d]^{4 m + 4 k + 12}$, which concludes our proof.
\end{proof}

We will say that a term is $\Sigma$-rooted if (at its root) it is an
application of a function symbol from the signature $\Sigma$. Let
$\Sigma_0$ and $\Sigma_1$ be signatures such that $\Sigma_0 \cap
\Sigma_1 = \emptyset$. Given a $\Sigma_0 \cup \Sigma_1$-formula $F$,
we will refer to the $\Sigma_i$-rooted terms that appear directly
under predicate and function symbols from $\Sigma_{1 - i}$ as the
$\Sigma_i$-interface terms in $F$. Interface terms are the ones for
which variable abstraction (Example~\ref{ex:abstraction}) introduces
fresh variables.

Let $\Sigma$ be a signature such that $\SigmaZ{} \cap \Sigma =
\emptyset$, and $F$ be a quantifier-free $\SigmaZT{}$-formula. We
denote by $\intfz(F)$ and $\intfsigma(F)$ the sets of
$\SigmaZ$-interface terms and $\Sigma$-interface terms in $F$, and by
$\intf(F)$ the union $\intfsigma(F) \cup \intfz(F)$. Let $\zatoms(F)$
be the set of atomic formulas in $F$ that are applications of $\leq$;
without loss of generality, we will assume that formulas contain no
arithmetic equalities or other kinds of inequalities. Also, let
$\maxc(F)$ be the maximum absolute value among integer coefficients in
$F$ plus one, and $\zvars(F)$ be the set of variable symbols that
appear directly under predicate and function symbols from
\SigmaZ{}. By $o(M)$ we denote the interpretation of linear expression
$o$ under the first-order model $M$. We finally define $\bounds(F,
\rho) = \{-\rho \leq t \wedge t \leq \rho\ |\ t \in \zvars(F) \cup
\intf(F) \}$, for positive integers $\rho$.

\begin{lemma}
  \label{lemma:ztbounds}
  Let $\Sigma$ be a signature such that $\SigmaZ{} \cap \Sigma =
  \emptyset$, and $T$ be a stably-infinite $\Sigma$-theory.
  Furthermore, let $H$ be a finite set of \SigmaZT{} literals, and $o$
  an objective function. Let
  \begin{align*}
    k~=&~|\zatoms(H)| + |\intf(H)| + |\zvars(H)| - 1, \\
    m~=&~|\intfz(H)| + k, \\
    n~=&~|\zvars(H)| + |\intf(H)| \text{, and} \\
    \rho~=&~(2 n + k)^3 [(m + 2) \maxc(H)]^{4m + 12}~.
  \end{align*}
  If there is a first-order model $M$ such that $M \models H \wedge
  \Z{} \wedge T$ and $M$ is a finite optimum for $H$ with respect to
  $o$ (\ie there is some integer constant $c$ such that $M \models o =
  c$ and there is no model $M'$ such that $M' \models H \wedge \Z{}
  \wedge T$ and $M' \models o < c$), then $\{H\} \cup \bounds(H, \rho)
  \cup \{o = o(M)\}$ is \ZT{}-satisfiable.
\end{lemma}

\begin{proof}
  We perform variable abstraction on $H$ (as demonstrated by
  Example~\ref{ex:abstraction}) and obtain a set of \SigmaZ{} literals
  $L$ and a set of $\Sigma$ literals $U$.  Let $V$ be the set of
  variables shared by $L$ and $U$ ($|V| \leq |\intf(H)| +
  |\zvars(H)|$). Let $E$ be the equivalence relation on $V$ induced by
  $M$. Clearly, $L \cup \alpha(V, E)$ is \Z{}-satisfiable and $U \cup
  \alpha(V, E)$ is $T$-satisfiable.  $L$ contains $|\intfz(H)|$
  equalities, $|\zatoms(H)|$ possibly negated inequalities, and
  $|\zvars(H)| + |\intf(H)|$ variables. We eliminate any negations in
  $L$ by rewriting $\neg (S \leq r)$ to $- S \leq - r - 1$ and obtain
  a set of inequalities $L'$.

  We order the variables $V$ so that $v$ is before $u$ if $M \models v
  < u$. (This is slight abuse of notation; $M$ in fact models an
  inequality between the corresponding terms in $H$.) We obtain a
  sequence of variables $v_o, v_1, \ldots, v_{|V| - 1}$, $v_i \in
  |V|$. Let
  \begin{align*}
    \zeta~=&~\{ v_i - v_{i+1} = 0~|~0 \leq i < |V| - 1 \text{ and
    } M \models v_i = v_{i+1} \}, \text{ and } \\
    \eta~=&~\{ v_i - v_{i+1} < 0~|~0 \leq i < |V| - 1 \text{ and } M
    \models v_i < v_{i+1} \}.
  \end{align*}

  Minimizing $o$ subject to $L' \cup \zeta \cup \eta$ is an ILP
  instance $I$ of the form~(\ref{eq:ilpform}). The maximum absolute
  value among coefficients in the matrices representing $I$ is
  $\maxc(H)$. Lemma~\ref{lemma:ilpbounds} applies to $I$. In the worst
  case, $\zeta$ is empty and $|\eta| = |\intf(H)| + |\zvars(H)| - 1$,
  \ie we have $|\intfz(H)|$ equalities and $k$ inequalities. Thus, $I$
  has an optimal assignment $A$ such that for every variable $v$,
  $|A(v)| \leq \rho$. Therefore, $L' \cup \zeta \cup \eta \cup
  \bounds(L', \rho) \cup \{o = o(M)\}$ is \Z{}-satisfiable. Since
  $\zeta \cup \eta \modelsz \alpha(V, E)$, $L' \modelsz L$, and
  $\bounds(L, \rho) = \bounds(L', \rho)$, $L \cup \bounds(L, \rho) \cup
  \{o = o(M)\} \cup \alpha(V, E)$ is \Z{}-satisfiable. Because it is
  also the case that $U \cup \alpha(V, E)$ is $T$-satisfiable, it
  follows from Theorem~\ref{thm:no} that $L \cup U \cup \bounds(L,
  \rho) \cup \{o = o(M)\}$ is \ZT{}-satisfiable. Thus, $H \cup
  \bounds(H, \rho) \cup \{o = o(M)\}$ is \ZT{}-satisfiable.
\end{proof}

\begin{theorem}
  \label{thm:ztbounds}
  Let $\Sigma$ be a signature such that $\SigmaZ{} \cap \Sigma =
  \emptyset$ and $T$ be a stably-infinite $\Sigma$-theory. Let $F$ be
  a quantifier-free $\SigmaZT{}$-formula and $o$ an objective
  function. Let $k = |\zatoms(F)| + |\intf(F)| + |\zvars(F)| - 1$, $m
  = |\intfz(F)| + k$, and $n = |\intf(F)| + |\zvars(F)|$. Finally, let
  $$
  \rho~=~(2 n + k)^3 [(m + 2) \maxc(F)]^{4m + 12}\text{.}
  $$
  If there is a first-order model $M$ such that $M \models F \wedge
  \Z{} \wedge T$ and $M$ is a finite optimum for $F$ with respect to
  $o$ (\ie there is some integer constant $c$ such that $M \models o =
  c$ and there is no model $M'$ such that $M' \models F \wedge \Z{}
  \wedge T$ and $M' \models o < c$), then $\{F\} \cup \bounds(F, \rho)
  \cup \{o = o(M)\}$ is \ZT{}-satisfiable.
\end{theorem}
\begin{proof}
  $M$ satisfies some literals in $F$, and falsifies the rest. Let $H$
  be the set $\{t\ |\ t \text{ is a literal in } F, M \models t\} \cup
  \{\neg t\ |\ t \text{ is a literal in } F, M \models \neg t\}$. $M$
  is a finite optimum for $H$ with respect to $o$: if it was not, it
  would not be a finite optimum for $F$ which contradicts our
  assumptions. For any first-order model $M'$ such that $M' \models H
  \wedge \Z{} \wedge T$, it will also be the case that $M' \models F
  \wedge \Z{} \wedge T$ (because $M$ and $M'$ assign the same truth
  values to the literals that appear in $F$, and the propositional
  structure does not change). By Lemma~\ref{lemma:ztbounds}, $H \cup
  \bounds(\rho) \cup \{o = o(M)\}$ is \ZT{}-satisfiable, \ie there is
  some model $M'$ that satisfies it (and also satisfies \ZT{}). $M'$
  also satisfies $F \cup \bounds(\rho) \cup \{o = o(M)\}$, which
  concludes our proof.
\end{proof}

Intuitively, given a quantifier-free \SigmaZT{}-formula $F$ and an
objective function $o$, Theorem~\ref{thm:ztbounds} allows us to bound
the integer terms of $F$ while preserving (finite)
optimality.~\footnote{A solver that relies on
  Theorem~\ref{thm:ztbounds} for bounding can detect unboundedness by
  imposing the additional constraint $o < o(M)$, re-computing bounds,
  and solving the resulting instance. If the updated instance is
  satisfiable, the original is unbounded.}

\subsection{Propositional Structure}
\label{sec:smt:prop}

Let $\Sigma$ be a signature such that $\SigmaZ{} \cap \Sigma =
\emptyset$ and $T$ be a stably-infinite $\Sigma$-theory. Throughout
this Section, $F$ will be a quantifier-free $\SigmaZT$-formula and $O$
will be an objective function. We show how to encode $F$ as the
conjunction of a set $C$ of integer linear constraints and a set $I$
of $\Sigma$-interface atoms, while preserving optimality with respect
to $O$. We apply a Tseitin-like algorithm, \ie we recursively
introduce $\{0, 1\}$-constrained variables for subformulas of $F$.

The most interesting part is dealing with predicate symbols from
$\Sigma$ and \SigmaZ{}. For the former we simply introduce annotated
$\Sigma$-interface atoms, \eg $[p(x)]^v$. For $\SigmaZ$, we can assume
that we are only confronted with inequalities of the form $\phi =
(\sum_i\ c_i \cdot v_i \leq r)$, because other relations can be
expressed in terms of $\leq$ and the propositional connectives. Also,
we only have to deal with sums over variable symbols, because variable
abstraction takes care of terms that involve $\Sigma$. We define a
variable $v(\phi)$ such that $v(\phi) \Leftrightarrow \phi$ as
follows. By bounding all variables as per Theorem~\ref{thm:ztbounds},
we compute $m$ and $k$ such that $m < \sum_i c_i \cdot v_i \leq k$
always holds.  The direction $v(\phi) \Rightarrow \phi$ can be
expressed as $\sum_i c_i \cdot v_i \leq r + (k - r) \cdot (1 -
v(\phi))$; for the opposite direction we have $\sum_i c_i \cdot v_i >
r + (m - r) \cdot v(\phi)$.

With atomic formulas taken care of, what remains is propositional
connectives; we encode them by using clauses in the standard
fashion. Clauses appear as part of our collection of ILP constraints:
$\vee_i l_i$ is equivalent to $\sum_i l_i \geq 1$. (For translating a
clause to a linear expression, a negative literal $\neg v_i$ appears
as $1 - v_i$ while a positive literal remains intact.)

Note that the (possibly astronomical) coefficients we compute only
serve the purpose of representing formulas as sets of linear
constraints. Their magnitude does not necessarily have algorithmic
side-effects. In the worst case, the initial continuous relaxation
will be weak, but relaxations will become stronger once we start
branching on the Boolean variables. This is no worse than Lazy SMT,
where linear constraints are only applicable once the SAT core assigns
the corresponding Boolean variables.

\comment{Modern ILP solvers provide an alternative called
  \emph{indicator constraints}~\footnote{\url{http://j.mp/NdkZZl}
    (CPLEX); \url{http://j.mp/NdlmDs} (SCIP)}, \ie natively supported
  constraints of the form $v = 1 \Rightarrow \sum_i\ c_i \cdot v_i
  \leq r$, where $v$ is a $\{0, 1\}$ variable. \bct{} does not
  explicitly provide indicator constraints, in order to stay within
  the standard formulation of ILP. }

\section{Implementation and Experiments}
\label{sec:experiments}

IMT first appeared in the context of architectural synthesis for
aerospace systems~\cite{hmp11}. Our approach combined an ILP solver
with a custom decision procedure for real-time constraints. We
implemented the combination in the \cobasa{} tool. The \cobasa{}
manifestation of IMT predates \bct{}. More recently, we implemented a
\bct{}-based solver, which we call \inez{}.

Our experimental evaluation is twofold. First, we show that an ILP
core is essential for the practicality of our synthesis approach. This
part of the evaluation does not deal with \bct{} in any way, but it
nevertheless provides evidence that IMT enables new
applications. Second, we compare our \bct{} prototype against
Z3~\cite{z3} and MathSAT~\cite{mathsatlia} using benchmarks from the
SMT-LIB.

\subsection{Motivation}

\begin{figure}[t]
\includegraphics[trim=0 14pt 0 18pt]{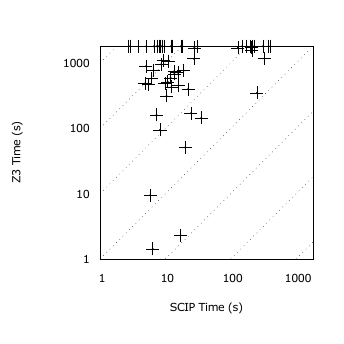}
\includegraphics[trim=0 14pt 0 18pt]{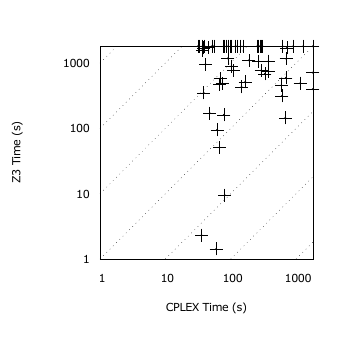}
\caption{Z3 versus SCIP and CPLEX (Synthesis Instances)}
\label{fig:synthesis}
\end{figure}

In the past, we applied \cobasa{} to solve the architectural synthesis
problem for the real, production data from the 787, which was provided
to us by Boeing~\cite{hmp11}. We have made available a family of 60
benchmark instances derived from Boeing problems, 47 of which are
unsatisfiable.~\footnote{\url{http://www.ccs.neu.edu/home/vpap/benchmarks.html}}
We will use these instances to evaluate the suitability of SMT and ILP
solvers as the core of a combination framework for synthesis, which is
a key application area for IMT.

We briefly describe the synthesis problem that gives rise to our
benchmarks. The basic components for this problem are cabinets
(providing resources like CPU time, bandwidth, battery backup,
and memory), software applications (that consume resources), and
global memory spaces (that also consume resources). Applications
and memories have to be mapped to cabinets subject to various
constraints, \eg resource allocation and fault
tolerance. Applications communicate via a publish-subscribe
network. Messages are aggregated into virtual links that are
multicast. The network and messages are subject to various
constraints, \eg bandwidth and scheduling constraints.  The
instances differ in the numbers of different components, the
amounts of different resources, and the collection of structural
and scheduling requirements they encode.

The instances are $\{0, 1\}$-ILP (also known as Pseudo-Boolean). There
are multiple ways to encode $\{0, 1\}$-ILP problems as SMT-LIB
instances. A direct translation led to SMT problems that Z3 could not
solve, so we tried several encodings, most of which yielded similar
results. One encoding was significantly better than the rest, and it
works as follows. Some of the linear constraints are clauses, \ie of
the form $\sum l_i \geq 1$ for literals $l_i$. It makes sense to help
SMT solvers by encoding such constraints as disjunctions of literals
instead of inequalities. To do this, we declare all variables to be
Boolean. Since almost all variables also appear in arithmetic contexts
where they are multiplied by constants greater than 1, we translate
such constraints as demonstrated by the following example: the linear
constraint $x + y + 2 z \geq 2$ becomes \texttt{(>= (+ (ite x 1 0)
  (ite y 1 0) (ite z 2 0)) 2)}.


Figure~\ref{fig:synthesis} visualizes the behavior of Z3 versus SCIP
and CPLEX. SCIP solves all instances, while CPLEX solves all but 3. Z3
solves 5 out of 13 satisfiable and 30 out of 47 unsatisfiable
instances. Strictly speaking, the only theory involved is
\Z{}. However, the instances do contain collections of scheduling
theory lemmas~\cite{hmp11} recorded by \cobasa{} in the process of
solving synthesis problems.  Therefore, our setup simulates the kinds
of queries a core solver would be confronted with, when coupled with
our scheduling solver. With suitability for synthesis as the
evaluation criterion, this is the most rigorous comparison we can
perform without implementing and optimizing the combination of SMT
with scheduling. Both ILP solvers significantly outperform Z3,
demonstrating the potential of a general ILP-based combination
framework.

\comment{Recall that with the direct encoding of the ILP problems, Z3
  failed to solve any of the instances above. It was only after we
  experimented with several encodings that we found an encoding that
  allowed Z3 to solve some of the problems. The ILP solvers did not
  require special encodings.}

\subsection{\bct{} Implementation}
\label{ssec:implementation}

\inez{} is implemented as an unobtrusive extension of SCIP.  Namely,
we have extended SCIP with a congruence closure procedure
(\emph{constraint handler} in SCIP terms), and also provide an SMT-LIB
frontend. The overall architecture of SCIP matches \bct{}. Subproblems
(called \emph{nodes}) are created by branching (\tr{Branch}) and
eliminated by operations semantically very similar to \tr{Drop},
\tr{Prune}, \tr{Retire}, and \tr{Unbounded}. SCIP employs various
techniques for cut generation (\tr{Learn}).

Like most modern MIP solvers, SCIP relies heavily on linear
relaxations. While not explicitly mentioned in \bct{}, linear
relaxations fit nicely:
\begin{inparaenum}[(a)]
\item $\operatorname{\mathsf{lb}}$ relies on continuous relaxations,
  as the best integral solution can be at most as good as the best
  non-integral solution.
\item Solutions to relaxations frequently guide branching, \eg if a
  solution assigns a non-integer value $r$ to variable $v$, it makes
  sense to branch around $r$ ($v \geq \lceil r \rceil$ or $v \leq
  \lfloor r \rfloor$).
\item If some relaxation is infeasible, then the corresponding
  subproblem is infeasible and \tr{Drop} applies, while
\item \tr{Retire} or \tr{Unbounded} applies to $T$-consistent integer
  solutions.
\end{inparaenum}

\bct{} proposes difference constraints as a channel of communication
with theory solvers (\tr{Propagate} rule). \inez{} implements
\tr{Propagate} as follows. For every pair of variables $x$ and $y$
whose (dis)equality is of interest to the theory solver, \inez{}
introduces a variable $d_{x,y}$ and imposes the constraint $d_{x,y} =
x - y$. When SCIP fixes the lower bound of $d_{x,y}$ to $l$, the
theory solver is notified of the difference constraint $l \leq x - y$
(similarly for the upper bound). We generally need quadratically many
such auxiliary variables. This is not necessarily a practical issue,
because most pairs of variables are irrelevant.

Our congruence closure procedure takes offsets into
account~\cite{ccoffsets}. In addition to standard propagation based on
congruence closure, \inez{} applies techniques specific to the integer
domain. Notably, if $x$ is bounded between $a$ and $b$, and for every
value of $k$ in $[a, b]$, $f(k)$ is bounded between $l$ and $u$, it
follows that $l \leq f(x) \leq u$, \ie we can impose bounds on
$f(x)$. $a$ and $b$ do not have to be constants, \eg it may be the
case that $m \leq d_{x,y} = x - y \leq n$ and $f(y + m), \ldots, f(y +
n)$ are bounded. We apply this idea dynamically (to benefit from local
bounds) and not just as preprocessing.

\bct{} does not preclude techniques that target special classes of
linear constraints. For example, an implementation can use the
two-watched-literal scheme to accelerate Boolean Constraint
Propagation on clauses.  SCIP implements such techniques. Note that
IMT does not strive to replace propositional reasoning, but rather to
shift a broader class of constraints to the core solver.

\comment{SCIP performs floating-point arithmetic, and can thus provide
  wrong answers. \inez{} inherits this deficiency. For many applications,
  numerical inaccuracies are not a concern, \eg the noise in the model
  overshadows the floating point error intervals. \comment{or an
    answer close enough to the theoretical optimal suffices.}
  Floating-point arithmetic for linear programming is a well-studied
  trade-off. \bct{} can be laid out on top of an exact solver, if the
  application domain dictates accuracy.}

\subsection{\bct{} on SMT-LIB}

\begin{figure}[t]
\includegraphics[trim=0 14pt 0 18pt]{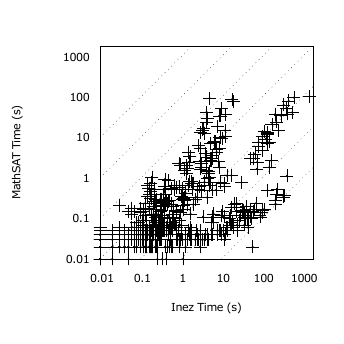}
\includegraphics[trim=0 14pt 0 18pt]{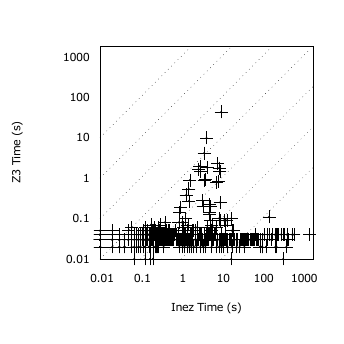}
\caption{\inez{} versus Z3 and MathSAT (SMT-LIB Instances)}
\label{fig:smtlib}
\end{figure}

We experimentally evaluate \inez{} against MathSAT and Z3, based on
the most relevant SMT-LIB category, which is \texttt{QF\_UFLIA}
(Quantifier-Free Linear Integer Arithmetic with Uninterpreted
Functions). Z3 and MathSAT solve all 562 benchmarks, and so does
\inez{}. While \inez{} is generally slower than the more mature SMT
solvers, the majority of the benchmarks (338) require less than a
second, 462 benchmarks require less than 10 seconds, and 528 less than
100 seconds. The integer-specific kind of propagation outlined in
Section~\ref{ssec:implementation} is crucial; we only solve 490
instances with this technique disabled.  Figure~\ref{fig:smtlib}
visualizes our experiments.

Interestingly, the underlying SCIP solver learns no cutting planes
whatsoever for 362 out of the 562 instances. For the remaining
instances the number of cuts is limited. Namely, 126 instances lead to
a single cut, 61 lead to 2 cuts, and the remaining 13 instances lead
to 9 cuts or less. Based on this observation, the branching part of
\inez{}'s branch-and-cut algorithm is being stressed here. We have
not yet tried to optimize branching heuristics, so there is plenty of
room for improvement. More importantly, the instances are not
representative of arithmetic-heavy optimization problems, where we
would expect more cuts.

A final observation is that SCIP performs floating-point (FP)
arithmetic, which may lead to wrong answers. Interestingly, \inez{}
provides no wrong answers for the benchmark set in question, \ie the
instances do not pose numerical difficulties. The fact that we learn
very few cutting planes partially explains why. There is little room
for learning anything at all, let alone for learning something
unsound.

\section{Related Work}
\label{sec:related}

\paragraph{Branch-and-Cut:} 
Branch-and-cut algorithms~\cite{branchandcut-handbook} combine
branch-and-bound with cutting plane techniques, \ie adding violated
inequalities (cuts) to the linear formulation. Different cut
generation methods have been studied for general integer programming
problems, starting with the seminal work of
Gomory~\cite{gomory63}. Cuts can also be generated in a
problem-specific way, \eg for TSP~\cite{gh91}. Problem-specific cuts
are analogous to theory lemmas in IMT.

\paragraph{Nelson-Oppen:} 
The seminal work of Nelson and Oppen~\cite{no79} provided the
foundations for combining decision procedures. Tinelli and
Harandi~\cite{noproof} revisit the Nelson-Oppen method and propose a
non-deterministic variant for non-convex stably-infinite
theories. Manna and Zarba provide a detailed survey of Nelson-Oppen
and related methods~\cite{combiningdp}.\comment{On a fundamental
  level, IMT combines $\Z$ and a theory $T$, so this line of results
  directly applies to our framework.}

\paragraph{SMT:}
ILP Modulo Theories resembles Satisfiability Modulo Theories, with ILP
as the core formalism instead of SAT.  SMT has been the subject of
active research over the last
decade~\cite{barrett02,demoura02,dpllt,z3}. Nieuwenhuis, Oliveras and
Tinelli~\cite{dpllt} present the abstract \dpllt{} framework for
reasoning about lazy SMT. Different fragments of Linear Arithmetic
have been studied as background theories for
SMT~\cite{ladpllt,mathsatlia}. Extensions of SMT support
optimization~\cite{smtopt,smtcosts,laqcost}.

\comment{
  \paragraph{Decomposition:} Another family of linear programming
  techniques that bears resemblance to IMT is Benders
  decomposition~\cite{benders62}. The linear programming problem is
  split into a master problem (which has only a subset of the original
  variables) and a subproblem. The master problem is solved first, and
  then the subproblem is solved with the values of the master problem
  fixed (``trial'' values). If the ``trial'' values are unacceptable,
  a cut is derived and added to the master problem. Logic-based
  Benders Decomposition~\cite{logicbenders} generalizes the strategy
  so that the master problem, the subproblem, or both are not
  necessarily linear programs. In IMT, the problem is ``decomposed''
  into a core ILP instance and background theory problems.  }

\paragraph{Generalized CDCL:} A family of solvers that generalize
CDCL-style search to richer logics recently
emerged~\cite{cutsat,naturalsmt,richerdpll,mcsat}. This research
direction can be viewed as progress towards SMT with a
non-propositional core. Our work is complementary, in the sense that
we do not focus on the core solver, but rather provide a way to
combine a non-Boolean core with theories.


\paragraph{Inexact Solvers:} Linear and integer programming solvers
generally perform FP (and thus inexact) calculations. Faure et
al. experiment with the inexact CPLEX solver as a theory
solver~\cite{inexact}\comment{They observe that it does not handle
  well the small, incremental queries that an SMT solver performs,}
and observe wrong answers. For many applications, numerical
inaccuracies are not a concern, \eg the noise in the model overshadows
the floating point error intervals. \comment{or an answer close enough
  to the theoretical optimal suffices.} However, accuracy is often
critical. Recent work~\cite{safeboundsmip,exactrational} proposes
using FP arithmetic as much as possible (especially for solving
continuous relaxations) while preserving safety. \comment{that cutting
  planes, infeasibility certificates, and bounds obtained from
  relaxations are safe.} IMT solvers can be built on top of both exact
and inexact solvers.

\section{Conclusions and Future Work}
\label{sec:conclusions}

We introduced the ILP Modulo Theories (IMT) framework for describing
problems that consist of linear constraints along with background
theory constraints.  We did this via the \bct{} transition system that
captures the essence of branch-and-cut for solving IMT problems. We
showed that \bct{} is a sound and complete optimization procedure for
the combination of ILP with stably-infinite theories. We conducted a
detailed comparison between SMT and IMT.

Many interesting research directions now open up. We could try to
relax requirements on the background theory (\eg stably-infiniteness,
signature disjointness) while preserving soundness and
completeness. We anticipate interesting connections between IMT and
other paradigms, \eg SMT, constraint programming, cut generation, and
decomposition. Additionally, the \bct{} architecture seems to allow
for significant parallelization. Finally, we believe that IMT has the
potential to enable interesting new applications.

\section*{Acknowledgements}

We would like to thank Harsh Raju Chamarthi, Mitesh Jain, and the
anonymous reviewers for their valuable comments and suggestions.

\bibliographystyle{abbrv}
\bibliography{cav_2013}

\end{document}